\title{Line and Plane Cover Numbers Revisited\thanks{\lncsarxiv{The
      full version of this article is available at Arxiv
      \cite{arxiv}.}{}  S.F.\ was supported by DFG grant FE$\,$340/11-1,
    A.W.\ by DFG grant WO$\,$758/9-1, and T.B.\ by NSERC.}}
\author{Therese Biedl\inst{1} 
\and Stefan Felsner\inst{2}\orcidID{0000-0002-6150-1998}
\and Henk~Meijer \inst{3}
\and Alexander~Wolff\inst{4}\orcidID{0000-0001-5872-718X}
}
\institute{University of
  Waterloo, Waterloo, Canada
  \and TU Berlin, Berlin, Germany
  \and University College Roosevelt, The Netherlands 
  \and Universit\"at W\"urzburg, W\"urzburg, Germany
}
\authorrunning{T.~Biedl et al.}
\newtheorem{thm}{Theorem}
\DeclareMathOperator{\tn}{tn}
\newcommand{\lncsarxiv}[2]{#2}
\let\doendproof\endproof
\renewcommand\endproof{~\hfill$\qed$\doendproof}
\begin{document}

\maketitle

\begin{abstract}
  A measure for the visual complexity of a straight-line crossing-free
  drawing of a graph is the
  minimum number of lines needed to cover all vertices.  For a given
    graph $G$, the minimum such number (over all drawings in  dimension
  $d \in \{2,3\}$) is called the \emph{$d$-dimensional weak line
    cover number} and denoted by $\pi^1_d(G)$.
  In 3D, the minimum number of \emph{planes} needed to cover all
  vertices of~$G$ is denoted by $\pi^2_3(G)$.  When edges are also
  required to be covered, the corresponding numbers $\rho^1_d(G)$ and
  $\rho^2_3(G)$ are called the \emph{(strong) line cover number} and
  the \emph{(strong) plane cover number}.

  Computing any of these cover numbers---except $\pi^1_2(G)$---is
  known to be NP-hard.  The complexity of computing $\pi^1_2(G)$
  was posed as an open problem by Chaplick et al.\ [WADS 2017].
  We show that it is NP-hard to decide, for a given planar
  graph~$G$, whether $\pi^1_2(G)=2$.  We further 
  show that the universal stacked triangulation of depth~$d$, $G_d$,
  has $\pi^1_2(G_d)=d+1$.
  Concerning~3D, we show that any $n$-vertex graph~$G$ with
  $\rho^2_3(G)=2$ has at most $5n-19$ edges, which is tight.
\end{abstract}

\section{Introduction}

Recently, there has been considerable interest in
representing graphs with as few objects as possible.
The idea behind this objective is to keep the
visual complexity of a drawing low for the observer.
The types of objects that have been used are straight-line segments
\cite{desw-dpgfs-CGTA07, dm-dptfs-CCCG14, hkms-dpgfg-JGAA18,
  kms-eaadfs-JGAA18} and circular arcs \cite{s-dgfa-JGAA15,
  hkms-dpgfg-JGAA18}.

Chaplick et al.~\cite{cflrvw-dgflfp-GD16} considered \emph{covering}
straight-line drawings of graphs by 
lines or planes
and defined the following new graph parameters.
Let $1 \leq l < d$, and let $G$ be a graph. The \emph{$l$-dimensional affine cover}
number of~$G$ in~$\mathbb R^d$, denoted by $\rho^l_d(G)$, is defined as the minimum
number of $l$-dimensional planes in~$\mathbb R^d$ such that~$G$ has a
crossing-free straight-line drawing that is contained in the union of
these planes.  The \emph{weak} $l$-dimensional affine cover
number of~$G$ in~$\mathbb R^d$, denoted by $\pi^l_d(G)$, is defined similarly
to $\rho^l_d(G)$, but under the weaker restriction that only the vertices
are contained in the union of the planes.
Clearly,  $\pi^l_d(G)\le\rho^l_d(G)$, and
if $l'\le l$ and $d'\le d$ then~$\pi^l_d(G)\le
\pi^{l'}_{d'}(G)$ and $\rho^l_d(G)\le \rho^{l'}_{d'}(G)$.  
It turns out that it suffices to study the parameters~$\rho^1_2$, $\rho^1_3$,
$\rho^2_3$, and $\pi^1_2$, $\pi^1_3$, $\pi^2_3$:

\begin{thm}[Collapse of the Affine Hierarchy \cite{cflrvw-dgflfp-GD16}]
  For any integers $1\le l<3\le d$ and for any graph $G$, it holds
  that $\pi^l_d(G)=\pi^l_3(G)$ and $\rho^l_d(G)=\rho^l_3(G)$.
\end{thm}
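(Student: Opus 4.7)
The inequality $\pi^l_d(G)\le \pi^l_3(G)$ (and the analogue for $\rho$) is immediate: the standard inclusion $\mathbb R^3\hookrightarrow \mathbb R^d$ turns any crossing-free straight-line drawing in $\mathbb R^3$ covered by $k$ affine $l$-flats into such a drawing in $\mathbb R^d$. So the whole content of the theorem is the reverse inequality, and the plan is to prove it by a \emph{generic projection} argument, reducing $\mathbb R^d$ to $\mathbb R^{d-1}$ one dimension at a time and iterating until we reach $\mathbb R^3$.

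Fix a crossing-free straight-line drawing $\Gamma$ of $G$ in $\mathbb R^d$ (with $d\ge 4$) whose vertex set is contained in the union of $l$-flats $F_1,\dots,F_k$; in the strong case, also each edge lies in $F_1\cup\dots\cup F_k$. For a unit vector $v\in\mathbb R^d$, let $\pi_v\colon \mathbb R^d\to v^\perp\cong \mathbb R^{d-1}$ be the orthogonal projection along $v$. The key claim is that for a generic $v$ all of the following hold simultaneously:
\begin{enumerate}
\item distinct vertices have distinct images (so in particular each edge $uv$ maps to a genuine segment);
\item each $F_i$ maps to an affine flat $\pi_v(F_i)$ of the same dimension $l$ (no dimension drop);
\item every pair of edges that is disjoint in $\Gamma$ maps to a disjoint pair of segments in $\pi_v(\Gamma)$;
\item every pair of edges sharing exactly one endpoint in $\Gamma$ continues to meet only at that endpoint in $\pi_v(\Gamma)$.
\end{enumerate}
Since $\pi_v$ is linear, $\pi_v(F_i)$ still covers all vertices (and, in the strong case, all edges) that $F_i$ covered. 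Hence a generic $v$ produces a crossing-free straight-line drawing $\pi_v(\Gamma)$ in $\mathbb R^{d-1}$ with the same $k$ covering flats. Iterating $d-3$ times yields the desired drawing in $\mathbb R^3$, proving $\pi^l_3(G)\le \pi^l_d(G)$ and, with the identical argument applied to edges, $\rho^l_3(G)\le \rho^l_d(G)$.

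The main obstacle is verifying item~(iii); the others are standard codimension computations (two specified points coincide under $\pi_v$ iff $v$ lies on the projective line through their difference, which is codimension $d-2\ge 2$ in the parameter space $\mathbb P^{d-1}$ of directions, and an $l$-flat collapses iff $v$ lies in its direction space, a subset of dimension $l-1\le 2<d-1$). For~(iii), two disjoint closed segments $e,f\subset\mathbb R^d$ project to meeting segments iff $v$ is parallel to some nonzero element of the difference set $e-f=\{p-q:p\in e,q\in f\}$. This difference set is a (degenerate) parallelogram, hence at most $2$-dimensional, so the set of bad directions forms an at most $2$-dimensional subset of $\mathbb P^{d-1}$, which has positive codimension precisely because $d\ge 4$. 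Item~(iv) is the same argument applied to the two free sub-segments. Since $G$ is finite, only finitely many bad conditions arise, and a generic $v$ avoids all of them, completing the induction step.
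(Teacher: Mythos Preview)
The paper does not prove this theorem; it is quoted from Chaplick et al.\ as background, so there is no in-paper argument to compare your attempt against.

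Your generic-projection argument is the standard route to such a collapse statement and is essentially correct. Two cosmetic slips: in~(i) the bad directions for a fixed pair of vertices form a single \emph{point} of $\mathbb P^{d-1}$ (codimension $d-1$, not $d-2$), and in~(ii) the direction space of an $l$-flat with $l\le 2$ has projective dimension $l-1\le 1$, not $\le 2$; neither affects the conclusion once $d\ge 4$. One condition worth adding explicitly is that no vertex is mapped into the interior of a non-incident edge---the bad directions there form a set of dimension at most~$1$ in $\mathbb P^{d-1}$, so the same genericity argument disposes of it. With these tweaks the induction step goes through and the proof is complete.
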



Disproving a conjecture of Firman et al.~\cite{flsw-ewlc2-GD18},
Eppstein~\cite{Epp19} constructed planar, cubic, 3-connected,
bipartite graphs on $n$ vertices with $\pi^1_2(G) \geq n^{1/3}$.
Answering a question of Chaplick et al.~\cite{cflrvw-dgflfp-GD16}
he also
constructed a family of subcubic series-parallel graphs with unbounded
$\pi^1_2$-value.  Felsner~\cite{Fe19} proved that, for every
4-connected plane triangulation~$G$ on $n$ vertices, it holds that
$\pi^1_2(G) \leq \sqrt{2n}$.
Chaplick et al.~\cite{cflrvw-cdgfl-WADS17} also investigated the
complexity of computing the affine cover numbers.  Among others, they
showed that in 3D, for $l \in \{1,2\}$, it is NP-complete to decide
whether $\pi^l_3 (G) \leq 2$ for a given graph~$G$.  In 2D, the
question has still been open, but a related question was raised by
Dujmovi\'c et al.~\cite{dpw-tlg-DMTCS04} already in 2004.  They
investigated so-called \emph{track layouts} which are defined as
follows.  A graph admits a $k$-track layout if its vertices can be
partitioned into $k$ ordered independent subsets such that any pair of
subsets induces a plane graph (w.r.t.\ the order of the subsets).  The
track number of a graph~$G$, $\tn(G)$, is the smallest~$k$ such that
$G$ admits a $k$-track layout.  See also \cite{DJMMUW} for some recent
developments.
Note that in general
$\pi^1_2(G) \ne \tn(G)$; for example, $\pi^1_2(K_4)=2$, whereas
$\tn(K_4)=4$.  Note further that a 3-track layout is necessarily plane
(which is not the case for $k$-track layouts with $k>3$).  Dujmovi\'c
posed the computational complexity of $k$-track layout as an open
question.

While it is easy to decide efficiently whether a graph admits a
2-track layout, Bannister et al.~\cite{bddew-tllpd-Algorithmica19}
answered the open question of Dujmovi\'c et al.\ already for 3-track
layouts in the affirmative.  They first showed that a graph has a
leveled planar drawing if and only if it is bipartite and has a
3-track layout.  Combining this results with the NP-hardness of level
planarity, proven by Heath and Rosenberg~\cite{hr-loguq-SICOMP92},
immediately showed that it is NP-hard to decide whether a given graph
has a 3-track layout.  For $k>3$, deciding the existence of a
$k$-track layout is NP-hard, too, since it suffices to add to the
given graph $k-3$ new vertices each of which is incident to all
original vertices of the graph~\cite{bddew-tllpd-Algorithmica19}.

\paragraph{Our contribution.} 
We investigate several problems concerning the {\em weak line cover number} $\pi_2^1(G)$
and the {\em strong plane cover number} $\rho_3^2(G)$:
\begin{itemize}
\item We settle the open question of Chaplick et
  al.~\cite[p.~268]{cflrvw-cdgfl-WADS17} 
  by showing that
  it is NP-hard to test whether, for a given planar graph~$G$,
  $\pi^1_2(G)=2$; see Section~\ref{sec:NP}.
\item We show 
  that $G_d$, the universal stacked triangulation of depth~$d$, (which
  has treewidth~3) has $\pi^1_2(G_d)=d+1=\log_3(2n_d-5)+1$, where
  $n_d$ is the number of vertices of~$G_d$; see
  Section~\ref{sec:threetrees}.
\item Eppstein has identified classes of treewidth-2 graphs with
  unbounded $\pi^1_2$-value.  We give an easy direct argument showing
  that some 2-tree $H_d$ with $n'_d$ vertices has $\pi^1_2(H_d)\in
  \Omega(\log n'_d)$; see \lncsarxiv{the full
    version~\cite{arxiv}}{Appendix~\ref{sec:2trees}}.
\item Concerning 3D, we show that any $n$-vertex graph~$G$ with
  $\rho^2_3(G)=2$ has at most $5n-19$ edges; see
  Section~\ref{sec:twoplanes}.  This bound is tight.
\end{itemize}

\section{Complexity of Computing Weak Line Covers in 2D}
\label{sec:NP}

In this section we investigate the computational complexity of
deciding whether a graph can be drawn on two lines.  

\begin{thm}
  It is NP-hard to decide whether a given plane (or planar) graph~$G$
  admits a drawing with $\pi^1_2(G)=2$.
\end{thm}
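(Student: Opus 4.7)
The plan is to give a polynomial reduction from a known NP-hard problem, the most natural candidate being a planar variant of $3$-SAT (or a restricted planar Hamiltonian-path variant), exploiting the fact that the edges drawn inside each of the two lines inherit a path-like structure.

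The first step is a structural analysis of any $2$-line drawing. Two lines $\ell_1,\ell_2$ in $\reals^2$ are either parallel or meet in exactly one point; in either case the vertices split into two sets $V_1,V_2$ (with at most one vertex on both lines), each linearly ordered along its line. An edge with both endpoints in $V_i$ is forced to connect two consecutive vertices in the order, for otherwise it would pass through an intermediate vertex. Hence the intra-line edges form a disjoint union of sub-paths of a Hamiltonian path on~$V_i$, while the cross edges realize a two-level leveled planar drawing consistent with both orderings. In particular, every triangle must have exactly two vertices on one line and one on the other. I would unify the parallel and intersecting cases at this stage, for instance by showing that the intersecting case reduces to the parallel one after contracting the intersection vertex (or by forcing one configuration through the choice of planar embedding).

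The core of the proof is the gadget construction. A \emph{variable gadget} must admit essentially two mirror-image placements on two lines, corresponding to the two truth values; a \emph{clause gadget} must be $2$-line-drawable if and only if at least one incident literal takes its satisfying value; and \emph{wires} connect gadgets and transmit truth values along the plane. Planarity of the whole construction is inherited from the planar 3-SAT instance, and the bistability of each gadget would be engineered from triangles together with rigid subgraphs (e.g., small $4$-connected pieces or dense ``anchor'' cliques) that admit only a constant number of $2$-line drawings. The forward direction of the reduction is then a routine construction of a drawing from a satisfying assignment.

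The main obstacle will be the rigidity analysis of the gadgets, that is, ruling out ``unintended'' $2$-line drawings of the constructed graph that do not correspond to any satisfying assignment. Vertices must not be allowed to migrate between the two lines in unexpected ways; wires must not be able to ``wrap around'' the intersection point of the two lines; and the clause gadget must accept all seven satisfying literal combinations without creating crossings or violating the consecutive-edge rule on each line. A secondary subtlety is that the theorem claims hardness for both plane and planar inputs, so the reduction must either fix an embedding compatible with the gadget analysis, or the analysis must be robust across all planar embeddings; I would aim for the first option and then observe that adding a suitable apex or triangulating the outer face promotes the result to planar graphs without a prescribed embedding.
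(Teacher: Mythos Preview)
Your proposal is not a proof but a plan, and the plan has a genuine structural error that would block any realization of it.

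The second paragraph asserts that, in the intersecting case, ``the cross edges realize a two-level leveled planar drawing consistent with both orderings,'' and that one can therefore unify the intersecting and parallel cases. This is false. When the two lines cross, the four half-lines emanating from the crossing point can host a \emph{spiral} structure: a vertex on one half-line can see vertices on adjacent half-lines arbitrarily far out, and by winding around the origin one simulates an unbounded number of levels, not two. The paper itself points out that $K_4$ has $\pi^1_2(K_4)=2$ yet is not leveled-planar; more dramatically, \emph{every} leveled-planar graph (with arbitrarily many levels) satisfies $\pi^1_2=2$. So the intersecting case does not collapse to a two-track or two-level picture, and ``contracting the intersection vertex'' does not recover the parallel case. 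Any rigidity analysis built on this structural claim would let unintended drawings slip through: your gadgets could be laid out spiralling around the origin in ways your two-level analysis does not account for.

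Beyond this, the proposal gives no concrete variable, clause, or wire gadgets and explicitly flags the rigidity analysis as the ``main obstacle''; as written it is a programme, not an argument.

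For comparison, the paper avoids all of this by reducing from \textsc{Level Planarity} rather than planar 3-SAT. It \emph{embraces} the spiral phenomenon instead of trying to suppress it: a rigid triangulated spiral attached to two glued copies of $K_4$ forces the two lines to intersect and carves the half-lines into $L$ ordered ``levels''; each edge of the input is replaced by a $K_{2,4}$, which in any two-line drawing must span three consecutive levels and hence pins the original vertices to every other level. This turns a $2$-line drawing of the augmented graph into a leveled-planar drawing of the original graph and vice versa, with no clause or wire gadgets and essentially no case analysis.
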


\begin{proof}
  Our proof is by reduction from the problem \textsc{Level Planarity},
  which Heath and Rosenberg~\cite{hr-loguq-SICOMP92} proved to be
  NP-hard.  The problem is defined as follows.  A planar graph~$G$ is
  \emph{leveled-planar} if its vertex set can be partitioned into sets
  $V_1,\dots,V_m$ such that $G$ has a planar straight-line drawing
  where, for every $i \in \{1,\dots,m\}$, vertices in~$V_i$ lie on the
  vertical line $\ell_i \colon y=i$ and each edge~$v_j v_k$ of~$G$
  connects two vertices on consecutive lines (that is, $|j-k|=1$).

  
  Chaplick et al.~\cite{cflrvw-dgflfp-GD16} have shown that every
  leveled-planar graph can be drawn on
  two lines.  The converse, however, is not true.  For example, $K_4$
  is not leveled-planar, but $\pi^1_2(K_4)=2$.  
  %
  Therefore, we modify the given graph in three ways.  (a)~We replace
  each edge of~$G$ by a $K_{2,4}$-gadget where the two nodes in one
  set of the bipartition replace the endpoints of the former edge; see
  Fig.~\ref{fig:K24-gadget}.  (b)~We add to the graph~$G'$ that
  resulted from the previous step a new subgraph~$G_0$ (two copies
  of~$K_4$ sharing exactly two vertices), which we
  connect by a path to a vertex on the outer face of~$G$.  (If the
  outer face is not fixed, we can try each vertex.)  In
  Fig.~\ref{fig:spiral}, $G_0$ is yellow 
  and the path is red.  The
  length~$L$ of the path is any upper bound on the number of levels
  of~$G'$, e.g., the diameter of~$G'$ (plus~1).  (c)~We attach
  to~$G_0$ a triangulated spiral~$S$ (dark green in
  Fig.~\ref{fig:spiral}).  The spiral makes $L+2$ right turns; its
  final vertex is identified with the outermost vertex of the previous
  turn.  Hence, apart from its many triangular faces, the graph
  $S+G_0$ has a large inner face~$F$ of degree $2(L+2)$ and a quadrangular
  outer face.  Let~$G''$ be the resulting graph.
  It remains to show that $G$ is leveled-planar if and only if $\pi_2^1(G'')=2$.

  ``$\Rightarrow$'': Fix a 
  leveled-planar drawing of~$G$.  
By doubling the layers and using the new layers to place the large sides of $K_{2,4}$'s, one easily sees that $G'$ is also leveled-planar, see Fig.~\ref{fig:K24-gadget}.
  As shown in Fig.~\ref{fig:spiral}, the large inner face~$F$
  of~$S+G_0$ can be drawn so that it partitions the half{}lines emanating from the origin
  into~$L$ levels.  (It is no problem that consecutive levels are
  turned by $90^\circ$.)  Since we chose~$L$ large enough (in
  particular $L \ge 2m-1$), we can easily draw~$G'$ inside~$F$.  Note
  that the red path attached to~$G_0$ is long enough to reach any
  vertex on the outer face of~$G'$.  Hence, $\pi^1_2(G'')=2$.
  
  \begin{figure}[tb]
    \begin{subfigure}[b]{.44\textwidth}
      \centering
      \includegraphics{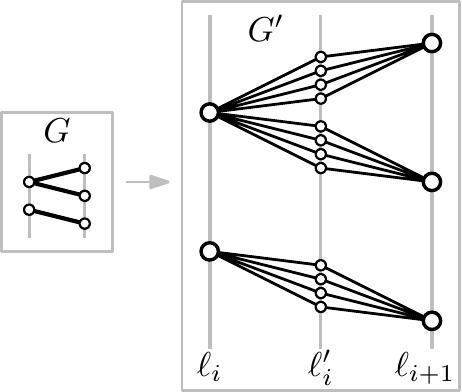}
      \caption{transforming~$G$ to~$G'$}
      \label{fig:K24-gadget}
    \end{subfigure}
    \hfill
    \begin{subfigure}[b]{.5\textwidth}
      \centering
      \includegraphics{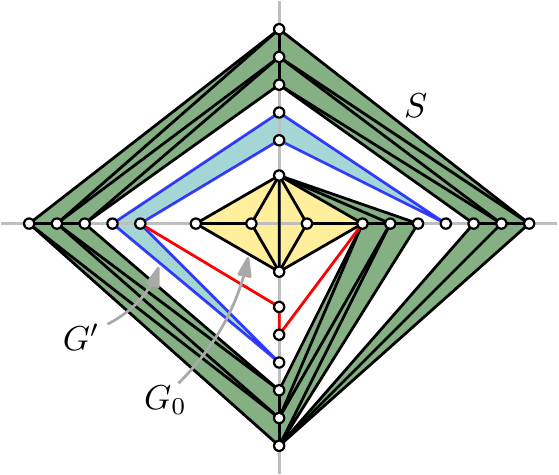}
      \caption{transforming~$G'$ to~$G''$}
      \label{fig:spiral}
    \end{subfigure}
    \caption{Our reduction from \textsc{Level Planarity}}
    \label{fig:reduction}
  \end{figure}

  ``$\Leftarrow$'': Fix a drawing of~$G''$ on two lines.  The two
  lines cannot be parallel since $G''$ contains $K_{2,4}$ and is not
  outer-planar; so after translation and/or skew we may assume that
  these two lines are the two coordinate axes.
  It is not hard to verify that~$G_0$ must be drawn such that the
  origin is in its interior, at the common edge of the two $K_4$'s.
  Furthermore, given this drawing of~$G_0$, the 3-connected spiral~$S$
  must be drawn as in Fig.~\ref{fig:spiral}.  Due to
  planarity and the fact that~$G'$ is connected to~$G_0$ via the red
  path, $G'$ can only be drawn in the interior of~$F$.  The drawing of
  $S+G_0$ partitions the half{}lines emanating from the origin into
  levels, which we number $1,2,\dots$ starting from the innermost
  level that contains a vertex of~$G'$.  Inside this face, the only
  way to draw the $K_{2,4}$-gadgets is as in
  Fig.~\ref{fig:K24-gadget}, spanning three consecutive levels.  This
  forces all vertices of~$G$ to be placed on the odd-numbered levels
  and the vertices in $G'-G$ on the even-numbered levels.  Now we can
  get a level assignment for~$G$ by reverting the transformation in
  Fig.~\ref{fig:K24-gadget}.
  Hence, $G$ is leveled-planar.

  This shows that our reduction is correct.  It runs in polynomial
  time.
\end{proof}

\section{Weak Line Covers of Planar 3-Trees in 2D}
\label{sec:threetrees}

In this section we consider the weak line cover number~$\pi^1_2$ for
planar graphs,
i.e., we are interested in crossing-free straight-line drawings 
with vertices located on a small collection of lines.
Clearly $\pi^1_2(G)=1$ if and only if $G$ is a forest of paths.
The set of graphs with $\pi^1_2(G)=2$, however, is already
surprisingly rich, it contains all trees, outerplanar graphs and
subgraphs of grids~\cite{bddew-tllpd-Algorithmica19,Epp19}.

\emph{Stacked triangulations}, a.k.a.~planar 3-trees or Apollonian
networks, are obtained from a triangle by repeatedly selecting a
triangular face~$T$ and adding a new vertex (the {\em vertex stacked
  inside $T$}) inside~$T$ with edges to the vertices of~$T$.  This
subdivides~$T$ into three smaller triangles, the \emph{children}
of~$T$.

For $d\geq 0$ let~$G_d$ be the \emph{universal stacked
  triangulation of depth $d$}, defined as follows. 
The graph~$G_0$ is a triangle~$T_0$, and~$G_d$ (for $d\geq 1$) is
obtained from~$G_{d-1}$ by adding a
stack vertex in each bounded face of~$G_{d-1}$.
Graph~$G_d$ has $n_d=\frac{1}{2}(3^d+5)$
vertices and $3^d$ bounded faces.  We show that its weak line cover
number is $d+1=\log_3(2n_d-5)+1 \in \Theta(\log n_d)$.
(A lower bound of $d$
can also be found in Eppstein's recent book~\cite[Thm. 16.13]{EppBook}.)

\begin{thm}\label{thm:stacked}
  For $d\geq 1$ it holds that $\pi(G_d) = d+1$.
\end{thm}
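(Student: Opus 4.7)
My plan is to prove $\pi(G_d) \le d+1$ and $\pi(G_d) \ge d+1$ separately, both by induction on $d$, with base case $d=1$ reducing to $\pi(K_4)=2$: the graph $K_4$ admits a crossing-free straight-line drawing with two vertices on each of two lines (e.g.\ place two outer triangle corners on one line and the remaining corner plus the stacked vertex on a crossing line), but is not a linear forest, so a single line does not suffice.

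For the upper bound $\pi(G_d)\le d+1$, I would build a drawing recursively, adding exactly one new line per level. The invariant to maintain is that the drawing of $G_{d-1}$ on $d$ lines admits a ``transversal'' line meeting the interior of every bounded face. Given such a drawing, place the newly stacked vertex of each face of $G_{d-1}$ at the intersection with this transversal; this gives a drawing of $G_d$ on $d+1$ lines. Re-establishing the invariant for the next step requires a careful reshaping of the drawing so that the bounded faces remain thin and aligned along a common direction, which should be doable by placing each new stack vertex very close to a chosen edge of its face.

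For the lower bound $\pi(G_d)\ge d+1$, consider any cover of $G_d$ by $k$ lines. Restricting to $G_{d-1}$, the induction hypothesis gives $k\ge d$; the goal is to improve this to $k\ge d+1$. The plan is to exhibit one of the three subtriangles $T_i$ at the central stacked vertex $v_0$ whose inscribed copy of $G_{d-1}$ is covered by at most $k-1$ of the $k$ lines; applying the induction hypothesis inside $T_i$ then yields $k-1\ge d$, i.e.\ $k\ge d+1$. Two geometric facts drive the search. First, no cover line contains $v_0$ together with two corners of $T_0$, because $v_0$ is strictly interior to $T_0$ and any such line would force $v_0$ onto an edge of $T_0$. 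Second, every line through $v_0$ enters the interior of at most two of the three subtriangles at $v_0$, since the three angles at $v_0$ sum to $360^\circ$ and each is strictly less than $180^\circ$. A pigeonhole argument on the lines through $v_0$ and through the $T_0$-corners should then isolate a subtriangle with a ``wasted'' line.

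The hard part will be making this pigeonhole argument tight, because a single cover line may serve the copy of $G_{d-1}$ inside $T_i$ in three different ways: by passing through $v_0$ (shared with all three subtriangles), by passing through one of the two $T_0$-corners of $T_i$ (shared with one other subtriangle), or by entering $T_i$'s interior at an inner vertex. A careful case analysis on the combinatorial type of the line arrangement at $v_0$---how many cover lines pass through $v_0$, which $T_0$-corners they also contain, and whether some cover line coincides with a $T_0$-edge---should rule out configurations in which every line serves all three subtriangles. This sharper accounting at the central vertex is the source of the $+1$ improvement over the lower bound of $d$ already known from Eppstein's book.
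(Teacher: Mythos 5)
Your lower-bound induction hinges on an invariant that you cannot establish and that your two geometric facts do not deliver: that some child triangle $T_i$ has its inscribed copy of $G_{d-1}$ covered by at most $k-1$ of the $k$ lines. The ``at most two subtriangles'' fact holds only for lines \emph{through} $v_0$; a cover line through none of $v_0,a,b,c$ can cross the interiors of \emph{all three} children and contain an inner vertex in each, hence serve all three copies, and every line through $v_0$ serves all three copies anyway via the shared corner (a line through a single corner $a$ can likewise cross the interior of the opposite child $bcv_0$). So nothing in your case analysis at $v_0$ excludes a cover in which every line touches every copy, and then your induction yields only $k\ge d$, not $k \ge d+1$. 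The paper escapes this by inducting on a \emph{different quantity}: it first shows that the vertices interior to a depth-$k$ copy require $k$ pairwise distinct lines, each of which \emph{crosses the interior} of the surrounding triangle (iteratively descend into a child avoided by the line through the stacked vertex---your second fact suffices for exactly this part). The extra line then comes from a three-case analysis of the lines through $a$, $b$, $c$, $x_1$ only: a line through two of these four points crosses the interior of at most one child, a line through one of them at most two, so some child $T_1$ is crossed by few of them, and the one or two lines avoiding $T_1$'s interior are automatically distinct from the $d-1$ lines needed for $T_1$'s interior vertices---\emph{even though those avoiding lines may well contain corners of $T_1$ and hence ``cover'' the copy}. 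The distinction between ``covers a vertex of the copy'' and ``crosses the interior of the child triangle'' is precisely the refinement your plan is missing.

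Your upper bound also has a genuine hole. The invariant ``one new line per level, namely a transversal meeting the interior of every bounded face'' is the entire difficulty, not a detail: after one stacking step the old transversal misses one child per face (your own second fact), so a \emph{new} straight line must cross all $3^k$ faces, i.e., realize a Hamiltonian path in the dual along a straight line, at every level simultaneously and compatibly with the collinearity constraints already imposed on earlier levels---and ``reshaping the drawing'' is not freely available once vertices are pinned to their lines. You offer no argument that this can be maintained, and the paper's construction does something quite different: it draws $G_k$ on $d+2$ horizontal layers, placing each stacked vertex on the layer adjacent to the apex of its face (so the vertices of a single level are \emph{spread over several layers}, not collinear), maintaining the invariant that every bounded face has one short edge while its other two edges cross at least $d-k$ layers; at the end the top two layers contain only $a_0$ and $x_0$, which a single oblique line covers, giving $d+1$ lines. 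In short, your lower-bound skeleton resembles the paper's but uses the wrong inductive statement, and your upper bound rests on an unproven and substantially stronger geometric claim than what the paper actually needs.
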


\begin{proof}
Here we prove only the lower bound; the construction for the upper
bound is illustrated in Fig.~\ref{fig:stacked} and given in 
\lncsarxiv{the full version}{Appendix~\ref{app:stacked}}.
Let $\cal{L}$ be a family of lines covering the vertices of a drawing
of~$G_d$.  Let~$a$, $b$, and~$c$ be the vertices of~$T_0$.  We first argue
that at least $d$ lines are needed to cover $V\setminus T_0$.
Let~$x_1$ be stacked into~$T_0$.  There is a line
$L_1\in \cal{L}$ covering $x_1$.  Note that $L_1$ can intersect only
two of the three child triangles of~$T_0$
(where ``intersect'' here means ``in the interior''). Let~$T_1$ be a
child triangle avoided by~$L_1$, and let $x_2$ be the vertex stacked
into~$T_1$.  There is a line $L_2\in \cal{L}$ covering $x_2$.
Let~$T_2$ be a child triangle of~$T_1$ avoided by~$L_2$. Iterating this
yields $d$ pairwise distinct lines in~$\cal{L}$.

To find one additional line in~$\cal{L}$, we distinguish some cases.
If a line $L\in \cal{L}$ covers two vertices of~$T_0$, then it covers
no inner vertex, and we are done.

Assume some line $L_a\in \cal{L}$ intersects~$x_1$ and one vertex
of~$T_0$, say $a$.  Let~$L_b$ and~$L_c$ be the lines intersecting~$b$
and~$c$.  The lines~$L_a$, $L_b$, and~$L_c$ are pairwise different,
else we are in the previous case.  Of the three child triangles of
$T_0$, at most one is intersected by $L_a$ and at most two each are
intersected by $L_b$ and~$L_c$.  Therefore, some child triangle~$T_1$
of~$T_0$ is intersected by at most one of~$L_a$, $L_b$, or~$L_c$.  The
graph~$G_{d-1}$ inside~$T_1$ requires at least $d-1$ lines for its
interior points, and at most one of those lines is~$L_a$, $L_b$,
or~$L_c$, so in total at least $d+1$ lines are needed.

The argument is similar if no line covers two of $a$, $b$, $c$,
and~$x_1$.  The four distinct lines supporting $a$, $b$, $c$,
and~$x_1$ then intersect at most two child triangles each.  So one
child triangle~$T_1$ is intersected by at most two of these lines.
Combining the $d-1$ lines needed for the interior of~$T_1$ with the
two lines that do not intersect it, shows that $d+1$ lines are
needed.
\end{proof}

\begin{figure}[tb]
  \centering
  \includegraphics[scale=.18,trim=800 0 0 0,clip]{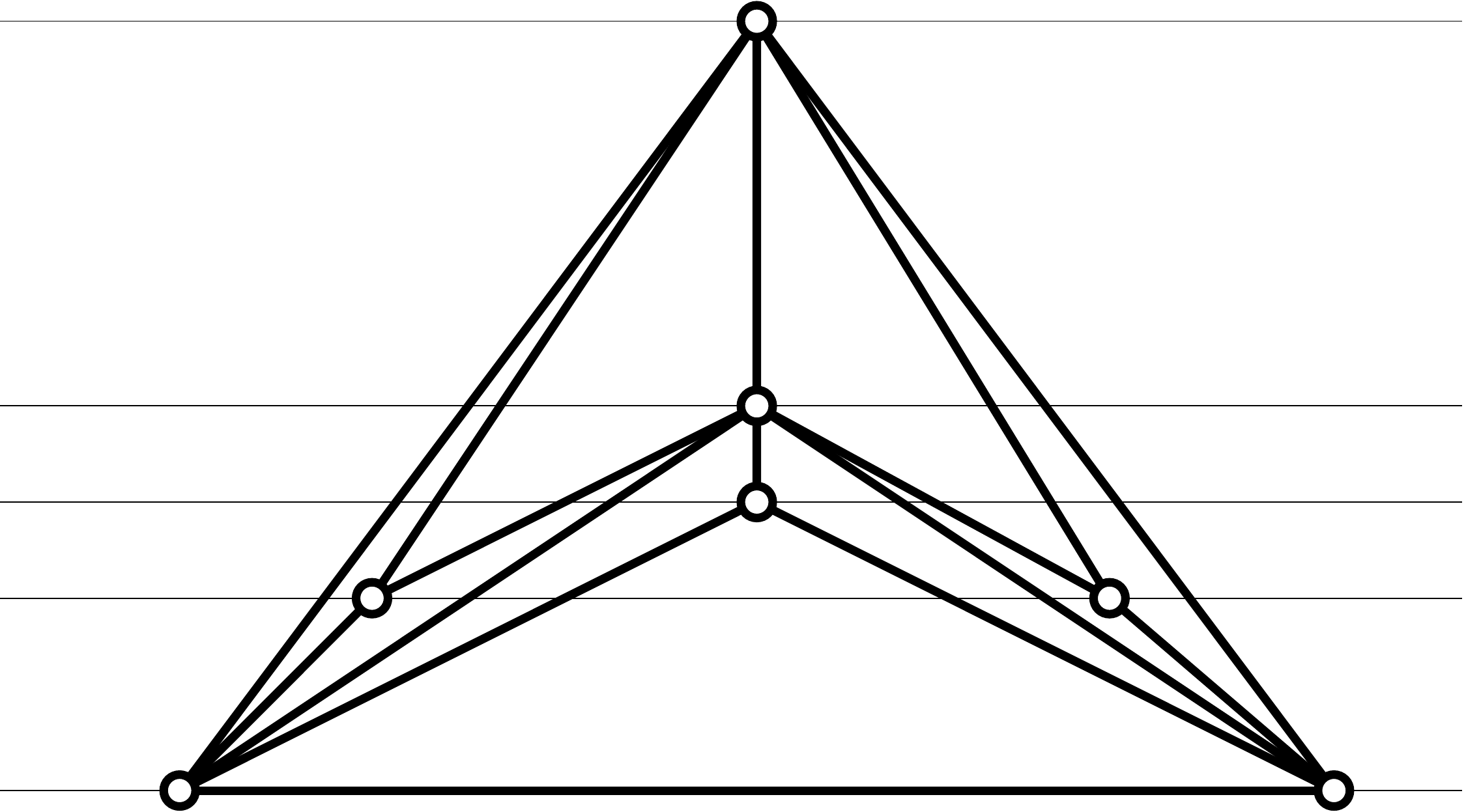}
  \caption{A drawing of $G_2$ that can be extended to a drawing of
    $G_3$ on 5 parallel lines.}
  \label{fig:stacked}
\end{figure}

\section{Maximal Graphs on Two Planes in 3D}
\label{sec:twoplanes}

We now switch to dimension $d=3$ and the strong cover number.
Obviously any graph $G$ with a drawing that is covered by two planes
has at most $6n-12$ edges since it is the union of two planar graphs.
Using maximality arguments and 
counting, we show that in fact $G$
has at most $5n-19$ edges if $n\geq 7$.
(The restriction $n \ge 7$ is required since for
$n=3,4,5,6$ we can have $3,6,9,12$ edges.)  

We argue first that our bound is tight.
The {\em spine} is the intersection of two planes $A$ and $B$. 
Put a path with $n-4$ vertices on the spine.  Add
one vertex in each of the four halfplanes and connect each of these
vertices to all vertices on the spine and to the vertex on the
opposite halfplane\lncsarxiv{. (We provide a figure in the full
  version.)}{; see Fig.~\ref{fig:tightness} in
  Appendix~\ref{app:boundarycases}.}
This yields $n-5$ edges on the path and $2(n-4)+1$ edges in each of
the two planes, so $5n-19$ edges in total.

\begin{thm}
  \label{thm:twoplanes}
  Any graph~$G$ with $\rho^2_3(G)=2$ and $n\geq 7$ vertices has at
  most $5n-19$ edges.
\end{thm}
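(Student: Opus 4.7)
The plan is to fix an optimal two-plane drawing of $G$ (assuming by edge-maximality that no further edge can be added) and classify its vertices by location: $s$ of them lie on the spine $\ell=A\cap B$, and $a_i$ (resp.\ $b_i$) lie strictly in the upper/lower half of $A$ (resp.\ $B$), with $a=a_1+a_2$, $b=b_1+b_2$, so that $s+a+b=n$. Every edge of $G$ falls into exactly one of three classes: (i) a \emph{spine edge}, whose endpoints both lie on $\ell$ and which must connect two consecutive spine vertices (no straight segment may pass through an intervening collinear spine vertex); (ii) a \emph{non-spine edge} drawn inside one of the four closed half-planes; or (iii) a \emph{spine-crossing edge} going from a strictly-upper to a strictly-lower vertex of the same plane, meeting $\ell$ at a non-vertex point. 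Writing $m_s,\tilde E^A,\tilde E^B,c_A,c_B$ for the corresponding counts, we have $|E(G)|=m_s+\tilde E^A+\tilde E^B+c_A+c_B$ and $m_s\le s-1$.

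In the main case $a_1,a_2,b_1,b_2\ge 1$, I would bound $\tilde E^X$ via a triangulation count. The upper half of plane $A$ is a crossing-free straight-line drawing on $s+a_1$ points, of which $s$ are collinear on its boundary line and $a_1\ge 1$ lie strictly above. The formula $3N-3-H$ for a triangulation of $N$ points with $H$ points on the combinatorial convex hull, applied with $N=s+a_1$ and $H=s+h_u$ (all spine vertices plus $h_u\ge 1$ strict-upper-hull vertices), gives at most $2s+3a_1-3-h_u\le 2s+3a_1-4$ edges in closed upper $A$; subtracting the $m_s$ spine edges yields $\tilde E^A_{\text{up}}\le s+3a_1-3$. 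Summing over the four halves gives $\tilde E^A+\tilde E^B\le 4s+3(a+b)-12$. For the crossings, those in $A$ form a planar bipartite graph on the $a_1$ upper and $a_2$ lower vertices of $A$, so $c_A\le 2a-4$ when $a\ge 3$ and $c_A\le 1$ when $a=2$; in either situation $c_A+c_B\le 2(a+b)-6$. Combining the three bounds gives
\[
|E(G)|\le\bigl(4s+3(a+b)-12\bigr)+(s-1)+\bigl(2(a+b)-6\bigr)=5s+5(a+b)-19=5n-19,
\]
matching the tight construction.

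The remaining cases, where at least one of the four halves is empty, I would handle by direct bookkeeping: if say $a_2=0$, then $c_A=0$ and the sharper bound $\tilde E^A\le s+3a-3$ replaces the main-case bound; combined with the plain planarity bound $|E(G_B)|\le 3(s+b)-6$ (used in place of the collinearity-aware bound whenever the latter is weaker), a short sub-case analysis still yields $|E(G)|\le 5n-19$ for $n\ge 7$, and the few small values of $n$ are handled by direct inspection. The main obstacle is establishing the per-half-plane bound $\tilde E^A_{\text{up}}\le s+3a_1-3$ cleanly: the $s-2$ interior collinear spine vertices lie on a single edge of the combinatorial hull, reducing the maximum triangulated edge count by exactly $s-2$ below the generic $3N-6$ planar bound, and checking that this slack is consumed precisely by the bipartite-planarity and spine-edge contributions in the main case requires careful bookkeeping; the secondary difficulty is verifying the inequality in every emptiness sub-case without the slack being lost.
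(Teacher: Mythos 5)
Your proposal is correct in its core and takes a genuinely different route from the paper. The paper first augments the two plane subgraphs to edge-maximal graphs $G_A^+$ and $G_B^+$ (allowing one bend on the spine), applies the generic planarity bound $3v-6$ in each plane, subtracts the $t$ shared spine edges, and reduces everything to the single inequality $2s-t\le n-3$; that inequality is then proved by a ``gap'' argument: by maximality and a re-routing/re-triangulation exchange, each of the $s-t-1$ internal gaps and the external gap is crossed by an edge of $G_A^+$, and these at least $s-t$ gap-crossing edges form a planar bipartite graph on at most $a$ vertices, giving $s-t\le 2a-3$. You instead charge edges directly to the four closed half-planes using the collinearity-aware triangulation count $3N-3-H$ (which harvests the slack of $s-2$ below $3N-6$ that the paper extracts via the gap argument), bound spine edges by $s-1$, and bound the spine-crossing edges in each plane by bipartite planarity. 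Both proofs bottleneck at the same bipartite-planar bound on spine-crossing edges (your $c_A\le 2a-3$ is the paper's footnoted bound verbatim); your version avoids the bent-edge augmentation and the exchange argument and is more self-contained, at the price of four-region bookkeeping and a larger family of degenerate sub-cases. That your chain of bounds sums to exactly $5n-19$ and is tight on the extremal construction confirms the accounting is right in the main case.

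Two caveats. First, the derivation of the key per-half bound is mis-stated: from the count $2s+3a_1-4$ for the closed upper half of $A$, ``subtracting the $m_s$ spine edges'' gives $s+3a_1-3$ only if $m_s=s-1$, and edge-maximality of the drawing does \emph{not} guarantee this---a spine segment between consecutive non-adjacent spine vertices can be blocked by a spine-crossing edge of the other plane (this is precisely the tension the paper's internal-gap argument resolves). The fix is easy: insert all $s-1$ consecutive-spine segments into the upper-half drawing purely as a counting device; since every class-(ii) edge meets the spine only at vertices, no crossings arise, and subtracting $s-1$ from $3(s+a_1)-3-(s+1)$ yields $\tilde E^A_{\mathrm{up}}\le s+3a_1-3$ unconditionally (using $a_1\ge 1$ for the extra hull point). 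Second, your boundary analysis is a sketch, not a proof: the cases with an empty half-plane, the case $s=0$ (where $m_s\le s-1$ fails outright, though $G$ then splits and $3n-12<5n-19$ suffices), and small-$s$ configurations need the explicit ledger you allude to. I spot-checked that your planarity fallback closes them (e.g.\ with $a_2=0$, $a=1$, $s=2$ one gets $b=n-3\ge 4$ and the bound follows), but these sub-cases must be written out, just as the paper does for its own degenerate configurations in the appendix.
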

\begin{proof}
Fix a drawing of~$G$ on planes $A$ and $B$, inducing planar
graphs~$G_A$ and~$G_B$ within those planes.  Let $G_A^+$ and $G_B^+$
be the graphs obtained 
from~$G_A$ and~$G_B$ by adding any edge that can be inserted without
crossing, within the same plane, and with at most one bend on the spine.
Clearly it suffices to argue that $G_A^+$ and $G_B^+$ together have at
most $5n-19$ edges.
Let $s$ be the number of vertices on the spine, let $a$ be the number
of vertices of~$G_A^+$ not on the spine, and let $b$ be the
number of vertices of~$G_B^+$ not on the spine.  Clearly, $a+b+s=n$.
We may assume $a\leq b$.   
We also assume that $1\leq s\leq n-4$ and that at least one edge
of~$G_A^+$ crosses the spine (so $2\leq a\leq b$); see
\lncsarxiv{the full version}{Appendix~\ref{app:boundarycases}}.

Let $t$ be the number of edges drawn along the spine.  These are the
only edges that belong to $G_A^+$ and $G_B^+$.  Since $G_A^+$ and $G_B^+$
have at least three vertices each, we can bound the number of edges
of~$G$, $m(G)$, as follows:
\begin{align}
\label{equ:1}
m(G) &\le m(G_A^+) + m(G_B^+)-t \,\le\, 3(s+a)-6 + 3(s+b)-6 - t \\
&= 3n-12 + 3s - t \,\le\, 4n-16+2s-t. \nonumber
\end{align}
So we must show that $2s-t\leq n-3$.  Let an {\em internal gap} be a
line segment connecting two consecutive, non-adjacent vertices on the spine.
There are $s-t-1$ internal gaps.  Let the {\em external gap} be the two
infinite parts of the spine.  Note that at least one edge of $G_A^+$ must
cross the external gap, because $G_A^+$ has at least one vertex on each side
of the gap, and we could connect the extreme such vertices (or re-route an existing
edge) to cross the external gap, perhaps using a bend on the spine.
We may further assume that even after such re-routing every internal gap is crossed by at
least one edge of~$G_A^+$.  Otherwise we could delete all edges of~$G_B^+$
passing through the gap, insert the edge between the spine vertices, and
re-triangulate the drawing of $G_B^+$ where we removed edges.  This
would remove an internal gap, but would not decrease the number of edges.
Since no edge can cross two gaps, at least $s-t$ edges of $G_A^+$ cross gaps.
These edges form a planar bipartite graph with
at most $a$ vertices; therefore $s-t\leq 2a-3$.%
\footnote{One might be tempted to
write a bound of $2a-4$ here, but we must allow for the possibility of
$a=2$, in case of which the planar bipartite graph may have $1=2a-3$ edges.}  
This yields
$2s-t \leq s + 2a-3 \leq s + a + b - 3 = n-3$
as desired.
\end{proof}

\noindent
We conjecture that the following more general statement
holds:

\begin{quote}
  Any $n$-vertex graph $G$ with $\rho^2_3(G)=k$ has at most
  $(2k+1)(n-2k)+k-1$ edges, for all large enough $n$.
\end{quote}


\paragraph{Acknowledgments.}
This research started at the Bertinoro Workshop on Graph Drawing
2017. We thank the organizers and other participants, 
in particular Will Evans, Sylvain Lazard, Pavel Valtr, Sue Whitesides,
and Steve Wismath.  We also thank Alex Pilz and Piotr Micek for
enlightening conversations.

\bibliographystyle{splncs04} 
\bibliography{abbrv,pi}

\lncsarxiv{\end{document}}{}

\newpage
\appendix
\section*{Appendix: Missing Proofs}

\section{Rest of the Proof of Theorem~\ref{thm:stacked}}
\label{app:stacked}

For the upper bound, we draw
$G_k$ (for $k=0,\dots,d$) on $d+2$ layers, i.e., distinct
horizontal lines.  For every bounded face of~$G_k$, one edge
is \emph{short} (i.e., either horizontal or connecting two adjacent layers)
while the other two edges each cross at least $d-k$ layers in their interior.
See also Fig.~\ref{fig:stacked}.  For $G_0$, do this by placing
$(b_0,c_0)$ horizontally on the lowest layer and
$a_0$ on the highest layer.

Assume $G_k$ (for $0\leq k<d$) has been drawn in this way, and consider a
bounded face $T_k=\{a_k,b_k,c_k\}$ of $G_k$ into which we want to place the
stacked vertex $x_k$ to get a drawing of $G_{k+1}$.  
Say $(b_k,c_k)$ is the short edge.  Hence the two edges incident to $a_k$
cross at least $d-k\geq 1$ layers in their interior.    Place $x_k$ on
the layer adjacent to $a_k$ and interior to $T_k$ and verify all conditions.
For $k=d$ we hence get a drawing of $G_d$ on $d+2$ layers.  Observe that the
top two layers contain only $a_0$ and the vertex $x_0$ stacked inside $T_0$.
(This exists by $d\geq 1$.)
Hence the line through $a_0,x_0$, together with the $d$ lines through the other
$d$ layers, gives a set of $d+1$ lines supporting the drawing.

\section{Weak Line Covers of 2-Trees}
\label{sec:2trees}

We already had the operation of stacking a vertex inside a triangle.
We now introduce a similar operation, \emph{stacking a vertex onto
an edge $(a,b)$}, which consists of adding a new vertex $x$ adjacent
to $a$ and $b$.  Define $H_0$ to be the graph consisting of a single
edge $(a,b)$, and let $H_d$ (for $d\geq 1$) be the graph obtained
from $H_{d-1}$ by stacking a vertex onto every edge of $H_{d-1}$.
The graph~$H_d$ has $3^d$ edges and (since it is a 2-tree)
$n'_d=\frac{3^d+3}{2}$ vertices.

\begin{theorem}
  \label{thm:twotrees}
  $\pi_2^1(H_d) \geq 1+\lfloor \frac{d}{8} \rfloor \in \Omega(\log n'_d)$.
\end{theorem}

\begin{proof}
  Fix an arbitrary straight-line planar drawing and line cover of
  $H_{d+8}$ (for some $d\geq 0$); we will show that this line cover
  needs at least one line more than a line cover of~$H_d$.  The
  theorem then holds by induction since $H_0$ needs one line.

  Let $H_0=\{(a,b)\}$ be the original edge from which~$H_{d+8}$ was
  built.  Let $v_1,\dots,v_5$ be the common neighbours of~$a$ and~$b$
  acquired as we extended~$H_0$ to~$H_5$ and hence stacked onto
  $(a,b)$ five times.  Let $L$ be the line in the line cover
  of~$H_{d+8}$ that supports~$a$.  By the pidgeon-hole principle, at
  least three of $v_1,\dots,v_5$ must lie in one (closed)
  half-space~$h$ of~$L$; say, $v_1$, $v_2$, and~$v_3$.  Sort them such
  that the rotation at $a$ contains (in counterclockwise order) a ray
  along $L$, $(a,v_1)$, $(a,v_2)$, $(a,v_3)$, the other ray along $L$
  (with the first pair and last pair possibly coinciding).  See
  Fig.~\ref{fig:twotrees}.

\begin{figure}[tb]
\hspace*{\fill}
\includegraphics[page=1]{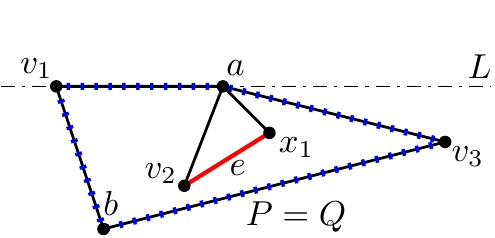}
\hspace*{\fill}
\includegraphics[page=2]{twotree}
\hspace*{\fill}
\caption{Finding an edge $e$ (thick red) that is inside polygon $P$ (dotted blue).}
\label{fig:twotrees}
\end{figure}

Let $Q$ be the quadrilateral $\langle a,v_1,b,v_3\rangle$.  Observe
that its sides are edges of~$H_5$; hence, they have no crossings.
When extending~$H_5$ to~$H_8$, we stack onto edge $(a,v_2)$ three
times; say with vertices~$x_1$, $x_2$, and~$x_3$.  We now distinguish
two cases depending on the location of~$b$:

\medskip\noindent{\bf Case 1: } $b$ is also in $h$.
Then $Q$ lies entirely within $h$, and its angle at $a$ is convex or flat.
In particular, edge $(a,v_2)$ (which lies between $(a,v_1)$ and $(a,v_3)$)
enters the interior of $Q$.  By planarity it crosses no edge of $Q$, so
$v_2$ (and with it also $x_1$) lie strictly inside $Q$.  Set $e=(v_2,x_1)$
and $P=Q$.

\medskip\noindent{\bf Case 2: } $b$ is not in $h$.
Then the angle of~$Q$ at~$a$ is reflex or flat.  This implies that $v_2$ (and
with it $x_1,x_2,x_3$) lie outside $Q$.  Therefore the edges $(a,x_i)$
for $i \in \{1,2,3\}$ must lie between $v_1$ and $v_3$ in the rotation at~$x$;
say the rotation is $v_1,x_1,x_2,x_3,v_3$ (with $v_2$ somewhere inbetween).
Since we have a straight-line drawing, $x_1$, $x_2$, and~$x_3$ lie
in~$h$, too.  Let $Q'$ be the quadrilateral
$\langle a,x_1,v_2,x_3 \rangle$.  With the same argument as in the previous case
(but using $v_2$ in place of~$b$), we see that $x_2$ lies strictly within~$Q'$.
Set $e=(v_2,x_2)$ and $P=Q'$.

\medskip
In both cases we have found a polygon $P$ such that $L$ does not intersect
its interior, and an edge $e$  that
lies strictly inside $P$ except perhaps at an endpoint (but that endpoint
is not on $L$).  Edge
$e$ has a graph $H_d$ stacked onto it, and none of the vertices of this $H_d$ 
(which are either ends of $e$ or strictly inside $P$) can be supported by $L$.
Hence, a line cover of~$H_{d+8}$ must contain at least one line more
than a line cover of~$H_d$.
\end{proof}

\section{Missing Cases for the Proof of Theorem~\ref{thm:twoplanes}}
\label{app:boundarycases}

Now we consider the boundary cases.
\begin{itemize}
\item If $s=0$ or $a=0$ or $a+b\le 2$ then $G$ is planar and $m(G) \le
  3n-6 < 5n-19$ (since $n \ge 7$).  Therefore we may assume $a\ge 1$,
  $a+b\ge 3$, and $s=n-a-b \le n-3$.
\item If $s=n-3$ then $a=1$ and $G$ consists of a planar graph in
  plane~$B$ on $n-1$ vertices plus a unique other vertex in~$A
  \setminus B$ adjacent to at most $s$ vertices on the spine.
  Therefore $m(G) \le 3(n-1)-6+s \le 4n-12 \le 5n-19$ by $n\geq 7$.
  So we may assume $s \le n-4$, hence $b \ge 3$.
\item Assume now that all vertices of $G_A^+$ are to one side of the
  spine or on the spine.  Observe that we may assume $s\geq 3$, for if
  $s\leq 2$ then, by Equation~\ref{equ:1} (which did not use that the
  vertices of $G_A^+$ occur on both sides), we have $m(G) \le 3n-12+3s
  \le 3n-6 < 5n-19$.

  Since $s\geq 3$, the convex hull of the drawing of $G_A^+$ contains
  at least $s+1\geq 4$ vertices, hence $m(G_A^+) \le 3(s+a)-7$.  This
  strengthens Equation~\ref{equ:1} to $m(G) \le 4n-17+2s-t$, so it
  suffices to show $2s-t\leq n-2$.  We can therefore afford to have no
  edge in the external gap.  There are no internal gaps (because those
  could be filled with edges with the same argument as before), so
  $s-t=1$ and $2s-t=s+1\leq n-2$ as desired.
\end{itemize}

\begin{figure}[tb]
  \centering
  \includegraphics[scale=.91]{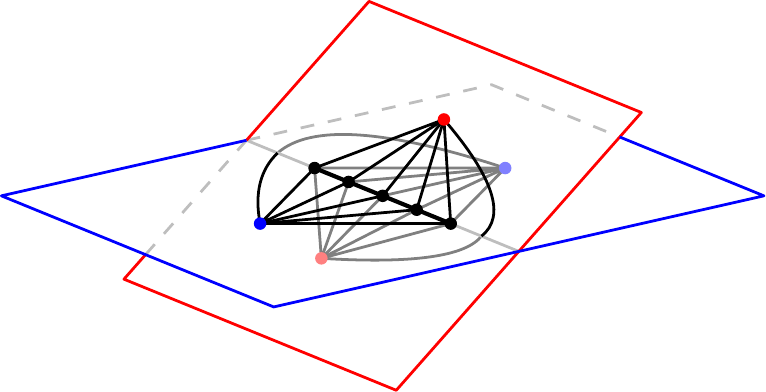}
  \caption{Example that shows that the bound $5n-19$ for the number of
    edges of a 2-plane graph is tight.}
  \label{fig:tightness}
\end{figure}
 
\end{document}